\newtheorem{definition}{Definition}
\newtheorem{theorem}{Theorem}
\newtheorem{lemma}{Lemma}
\newtheorem{example}{Example}
\newtheorem{proposition}{Proposition}
\newtheorem{remark}{Remark}
\begin{document}

\title{On a Convex Logic Fragment \\ for Learning and Reasoning}

\author{Francesco~Giannini, Michelangelo~Diligenti, Marco~Gori, and~Marco~Maggini

\thanks{Manuscript received September 15, 2018; revised October 31, 2018.}
\thanks{The authors are with the University of Siena, via Roma 56, 53100 ITALY, (e-mail: \{fgiannini,diligmic,marco,maggini\}@diism.unisi.it)}%

}

\markboth{Paper ID TFS-2018-0688}%
{Shell \MakeLowercase{\textit{et al.}}: Bare Demo of IEEEtran.cls for IEEE Journals}



\IEEEoverridecommandlockouts
\IEEEpubid{\makebox[\columnwidth]{\copyright Copyright 2018 IEEE \hfill} \hspace{\columnsep}\makebox[\columnwidth]{ }}

\maketitle

\begin{abstract}
In this paper we introduce the \emph{convex fragment} of \L ukasiewicz Logic and discuss its possible applications in different learning schemes. Indeed, the provided theoretical results are highly general, because they can be exploited in any learning framework involving logical constraints. The method is of particular interest since the fragment guarantees to deal with convex constraints, which are shown to be  equivalent to a set of linear constraints. Within this framework, we are able to formulate learning with kernel machines as well as collective classification as a quadratic programming problem.
\end{abstract}

\begin{IEEEkeywords}
Learning from constraints, First--order logic, Convex optimization, Kernel machines, Collective classification.
\end{IEEEkeywords}

\IEEEpeerreviewmaketitle

 \IEEEpubidadjcol

\section{Introduction}
\label{sec:introduction}

\IEEEPARstart{T}{he} theory we present can be exploited in different learning settings, especially in contexts where some relational knowledge on the task is available. In general, a learning process can be thought of as a constraint satisfaction problem, where the constraints represent the knowledge about the functions to be learned. 
Supervisions act as a special class of constraints providing positive as well as negative examples for the task, while it may be also useful to express relationships among the classification categories. For example, we can be interested in the satisfaction of a rule like \emph{``any pattern classified as a cat has to be classified as an animal''}, where \emph{cat} and \emph{animal} have to be thought of as the membership functions of two classes to learn. In such a sense, symbolic logic provides a natural way to express factual and abstract knowledge about a problem by means of logical formulas in a certain logic. Logical representations have been successfully employed in various learning schemes from a long time, since they allow high--level representations. In addition, we can exploit theorems and fundamental properties of the chosen logic to get an advantage for the learning strategy. In fact, we choose \L ukasiewicz Logic since the McNaughton Theorem provides a functional representation of \L ukasiewicz formulas by piecewise linear functions. Exploiting this result, we are able to characterize the fragment of \L ukasiewicz formulas corresponding to convex functional constraints (see also \cite{giannini2017learning}). As we better clarify in the following, this result is very general because it can be applied to different learning settings, where the employment of the fragment brings benefits to the problem solution. In particular, this paper shows how to exploit this result into kernel machines, collective classification and Probabilistic Soft Logic. In the experimental section, the theoretical result is applied to a simple artificial learning task to enlighten the properties of the proposed framework in a controlled setting.

The paper is organized as follows: Section~\ref{sec:learning_and_convexity} discusses the importance of expressing convex constraints representing logic knowledge for machine learning applications, while in Section \ref{sec:relwork} we discuss related works and we carry out an informal comparison with respect to the main approaches making use of logical constraints in the literature. Section \ref{sec:luk} reports the theoretical results characterizing the \L ukasiewicz fragment that yields convex logical constraints. In Section \ref{sec:zoo} we introduce different learning schemes where the fragment can be successfully applied. For instance, we show how to extend classical kernel machine theory with logical constraints still preserving quadratic optimization. Further, we formulate a collective classification task and we discuss the extension of Probabilistic Soft Logic allowing logical formulas to belong to the convex fragment instead of be restricted to disjunctive clauses. In section \ref{sec:exp} the theory is evaluated on an experimental setting. Finally, some conclusions and additional remarks are drawn in Section \ref{sec:conc}. The Appendix \ref{appendix} reports the proofs of the theoretical results, together with more technical details.

\section{Learning and Convexity}
\label{sec:learning_and_convexity}
The field of Machine Learning (ML)~\cite{robert2014machine,nasrabadi2007pattern} defines theories and a wide range of techniques, which can be used to approximate an unknown function. A classical starting point for most ML approaches is to define a cost functional (the same reasoning can be applied to probabilistic methods, where the cost functional is replaced by a likelihood). The cost functional depends on the parameters of the approximators and it assumes lower values for functions having a closer-to-desired behavior. For example, the cost functional can express how the function should behave on some data points or it can penalize values outside a given range or force the function to respond similarly to different pairs of inputs.
Once the cost functional is defined, the training of the learner becomes an optimization problem with respect to the parameters, which can be tackled via gradient descent or other optimization techniques.
One desired property of a cost functional is to be convex. Indeed, non-convex optimization is intractable in a general sense~\cite{szu1986non} and it can be efficiently faced only when sub-optimal solutions are acceptable\footnote{An optimal solution is a solution for which there are no other solutions providing a lower cost.}.
On the other hand, convex optimization is very well understood and there are available algorithmic solutions to efficiently find optimal solutions. For example, Conjugate Gradient Descent~\cite{moller1993scaled} is guaranteed to converge to an optimal solution in linear time when the cost functional is convex.
Many methods in machine Learning explicitly aim at defining convex cost functionals by constraining the form of the approximation. For example, training of Support Vector Machines~\cite{cortes1995support} corresponds to solving a linear (and therefore convex) optimization problem. However, it is not obvious how to get a convex cost functional in general. For instance, the definition of a convex cost functional is more challenging when the learning task is more complex and general logic knowledge needs to be exploited to  express properties of the unknown function, like assumed by Statistical Relational Learning (SRL) methods~\cite{getoor2007introduction}.
This class of learners includes methods like Markov Logic Networks~(MLNs)~\cite{richardson2006markov} defining a probabilistic logic, which assigns a weight to each clause in the knowledge base. However, inference in MLNs builds a non-convex optimization problem, making the methodology impractical for large learning problems.
To overtake this limitation, Probabilistic Soft Logic (PSL)~\cite{kimmig2012short} relaxes the learning task using fuzzy logic and restricts the knowledge base to clauses with a specific form. Under these constraints inference corresponds to a convex optimization problem, which can be solved even for large learning tasks. However, the limitations in the form of the clauses prevents the application of PSL to arbitrary learning tasks.
This paper studies under which conditions a general description of the knowledge base corresponds to a convex optimization problem, when integrated into learning. This class of descriptions is larger than that considered in PSL and similar learning methods, making this theoretical result useful across a wide range of machine learning applications.

\section{Related Works}
\label{sec:relwork}
There are several approaches to embed logical knowledge into learning processes.
As an example, the connections between logic and \emph{kernel machines} have been the subject of many investigations with different techniques from several authors. For instance, in \cite{cumby2003kernel} a family of kernel functions is built up from a Feature Description Language to exploit the relational structure of the domain. One limitation of this work is that the integration with the logic formalism does not reveal very tight connections. A different approach is considered in \cite{diligenti2012bridging}, where the t-norm theory is used to translate first--order formulas into real valued functions. Unfortunately, the logical constraints turn out to be not convex in general, unless one restricts the attention to only Horn clauses~\cite{diligenti2016learning}. With respect to this point, the fragment we present in this paper provides the complete set of formulas that can be written in \L ukasiewicz logic to get convex functional constraints. 

A different research area that combines logic programming with machine learning techniques is \emph{Inductive Logic Programming} (ILP) \cite{muggleton1991inductive,muggleton1994inductive}. The general inductive problem is as follows: given a set of positive $P$ and negative $N$ examples and a consistent background knowledge $B$, find a hypothesis $H$ such that the conjunction of $H$ and $B$ entails all the examples of $P$ and none of $N$. A large number of hypotheses typically fits such a definition. For instance the Bayesian ILP setting \cite{muggleton1994bayesian} assumes a prior probability distribution defined over the hypothesis space. In \cite{de2008probabilistic} clauses are given a probability value and two methods to estimate these parameters and the hypothesis are provided.  In addition, it is worth to mention some related works on Inductive Logic Programming and kernel machines like \cite{landwehr2010fast} and \cite{muggleton2005support}. In the first paper the system FOIL is combined with kernel methods by leveraging FOIL search for a set of relevant clauses. In the second one a kernel, that is an inner product in the feature space spanned by a given set of first--order hypothesized clauses, is proposed.

\emph{Statistical Relational Learning} (SRL) deserves a special mention in this brief overview. SRL focus on relational domains under uncertainty, where relations among objects are expressed by first--order formulas and uncertainty is handled by setting up probabilistic graphical models, like Bayesian networks. Probabilistic logic learning has been the subject of many investigations from several authors and we recommend \cite{de2003probabilistic} for a survey. In these years, a lot of frameworks arose into this field, among which emerged \emph{Markov Logic Networks} (MLNs) \cite{richardson2006markov} and \emph{Probabilistic Soft Logic} (PSL) \cite{kimmig2012short} already mentioned. Formally an MLN is a set of weighted First--Order logic (FOL) formulas, but can be viewed as a template for constructing Markov Random Fields (MRFs) to model the joint distribution of the set of all the possible atomic groundings in formulas. PSL, initially called Probabilistic Similarity Logic \cite{brocheler2012probabilistic}, has a very similar approach, that can be viewed as a template for Hinge-Loss Markov Random Fields, that are continuous generalization of MRFs whose formulas are restricted to disjunctive clauses and translated by the \L ukasiewicz t-norm and t-conorm. With this restriction, the authors are able to get convex optimization \cite{bach2013hinge} for the \emph{most probable explanation} (MPE) task avoiding the general intractability of MLNs and they also provide different approaches to estimate the rule weights \cite{bach2015hinge}. As we point out in section \ref{sec:zoo}, where a more accurate comparison with PSL is discussed, the set of formulas, that keep convexity in this frame, can be extended to the whole convex fragment we provide. However, our approach is more in the spirit of \cite{diligenti2017semantic} where Semantic Based Regularization (SBR), a unified framework for inference and learning that is centered around the notion of constraints and of parsimony principle,  is presented. The SBR goal is to find the smoothest functions satisfying the (possibly weighted) constraints. As pointed out by the paper, the given solution can be interpreted also in probabilistic terms and directly compared with MLNs. Finally, in a similar direction, some recent works by Serafini et al. \cite{serafini2016logic,serafini2016learning} propose a framework called Logic Tensor Networks (LTN) that integrates learning based on tensor networks \cite{socher2013reasoning} with reasoning using first-order \L ukasiewicz Logic, all implemented in TENSORFLOW.

\section{The Convex \L ukasiewicz Fragment}
\label{sec:luk}
In this section we present the main theoretical contribution of the paper. Since the result is about logic, it applies to a wide class of learning settings exploiting logical arguments. In particular, we discuss the general case of a multi--task learning problem, where we are interested in the integration of prior knowledge with supervisions for learning a set $\{p_1,\ldots,p_J\}$ of real--valued functions. We decided to represent the prior knowledge by means of FOL formulas giving that this logical formalism allows us to express elaborated relations among the objective functions in a simple way and with a suitable granularity. Since the objectives can assume continuous values, a fuzzy logic turns out to be an effective tool to express the logical constraints and like other authors do \cite{serafini2016logic,bach2015hinge}, we decided to focus on \L ukasiewicz Logic. In principle, the logical constraints can be expressed by any (fuzzy) logic, however there are several reasons to work into the \L ukasiewicz frame. For instance, among the three fundamental fuzzy logics given by a continuous t-norm\footnote{A t-norm  is a binary operation extending the boolean conjunction to [0,1].} (\L ukasiewicz, G\"odel and Product), the \L ukasiewicz logic is the only one providing an equivalent \emph{prenex normal form} (i.e. quantifiers followed by a quantifier--free part) and a continuous involutive negation ($\neg\neg x = x$) preserving the De Morgan laws (for more details on involutive negations in fuzzy logic, see \cite{esteva2000residuated}). 

\begin{remark}
	For the learning problem we can assume to deal with propositional formulas. Indeed, the objective functions are evaluated on finite training sets and according to the following rules, FOL formulas can be rewritten in an equivalent free--quantifier form as:
	\begin{equation}\label{qq}
	\forall x\, p_i(x)\simeq\bigwedge_{a\in Dom(p_i)} p_i(a), \quad \exists x\, p_i(x)\simeq\bigvee_{a\in Dom(p_i)} p_i(a).
	\end{equation}
	
	As a result, we can exploit stronger results from propositional theories still yielding FOL expressiveness.
\end{remark}

\subsection{Propositional \L ukasiewicz Logic}
Propositional \L ukasiewicz Logic {\bf \L} is the fuzzy logic we get if we take the t--norm $x\otimes y=\max\{0,x+y-1\}$ as truth function for the conjunction on the continuous values $[0,1]$. It is worth to mention that {\bf \L} is sound and complete with respect to its standard algebra on $[0,1]$ and the operations corresponding to \L ukasiewicz connectives are reported in the Table \ref{tab:luksem}. We only notice that the algebraic semantics of \L ukasiewicz connectives is determined by the chosen t--norm \cite{hajek1998metamathematics} and we recommend \cite{novakmathematical} for a more detailed analysis about \L ukasiewicz logic.

\begin{table}[htbp]
		\caption{\L ukasiewicz connectives and their algebraic counterparts.}
	\centering
		\begin{tabular}{|c|c|}
			\hline
			{\bf Formula} & {\bf Operation} \\
			\hline
			\hline
			\multicolumn{2}{|c|}{\emph{conjunctions}} \\
			\hline
			$x\otimes y$ & $\max\{0,x+y-1\}$ \\
			\hline
			$x\wedge y$ & $\min\{x,y\}$ \\
			\hline
			\hline
			\multicolumn{2}{|c|}{\emph{disjunctions}} \\
			\hline
			$x\oplus y$ & $\min\{1,x+y\}$ \\
			\hline
			$x\vee y$ & $\max\{x,y\}$ \\
			\hline
			\hline
			\multicolumn{2}{|c|}{\emph{implication and negation}} \\
			\hline
			$x\rightarrow y$ & $\min\{1,1-x+y\}$ \\
			\hline
			$\neg x$ & $1-x$ \\
			\hline
			\hline
			\multicolumn{2}{|c|}{\emph{constants}} \\
			\hline
			$\underbar{0}$ & $0$ \\
			\hline
			$\underbar{1}$ & $1$ \\
			\hline
		\end{tabular}
	\label{tab:luksem}
\end{table}
The connectives $\otimes,\oplus$ correspond to \L ukasiewicz t--norm and t--conorm. They are called \emph{strong} connectives in opposition to $\wedge,\vee$ that are called \emph{weak}. Indeed the following relations hold for all $x,y\in[0,1]$:
\[
x\otimes y\leq x\wedge y\leq x\vee y\leq x\oplus y.
\]
The implication is the residuum of the t--norm and it can be also defined as $x\rightarrow y:=\neg x\oplus y$. In addition, $\underbar{0}\equiv x\otimes\neg x$ and $\underbar{1}\equiv x\oplus\neg x$ for any $x\in\mathcal{V}$, where $\mathcal{V}$ denotes the set of all the propositional variables. It is worth to notice that in {\bf \L} we have two conjunctions and two disjunctions (a strong and a weak). Weak connectives can be defined from the strong ones in every fuzzy logic. For instance  the weak conjunction that corresponds to the minimum operation, can be defined as $x\wedge y:=x\otimes(x\rightarrow y)$. Even if the two conjunctions and the two disjunctions coincide on the crisp values $\{0,1\}$ they generalize quite differently on continuous values $[0,1]$.

We conclude this brief overview on \L ukasiewicz logic recalling some fundamental properties. For instance, the \emph{De Morgan laws} between both weak and strong connectives hold, as well as the \emph{Distributive laws} of strong over weak. We only notice that the distributive property does not hold between strong conjunction and strong disjunction, namely for instance
\[
x\oplus(y\otimes z)\not\equiv(x\oplus y)\otimes(x\oplus z),
\]
for some $x,y,z$, as shown in the following counter--example taking $x=0.1,\,y=z=0.5$:
\[
\min\{1,x+\max\{0,y+z-1\}\}=0.1
\]
\[
\max\{0,\min\{1,x+y\}+\min\{1,x+z\}-1\}=0.2.
\]

In view of the learning procedure, formulated as an optimization problem, we are interested in a \emph{functional representation} of logical formulas. Indeed, FOL will be translated into functional constraints for the objective functions, which are to be thought of as predicates or as any--arity relations. Since the algebra of {\bf \L}-formulas on $n$ variables is isomorphic to the algebra of functions from $[0,1]^n$ to $[0,1]$ \cite{novakmathematical}, we can translate formulas into functions in a very natural way. In particular, the zero formula $\underbar{0}$ corresponds to the constant function equal to 0 and, given $\circ\in\{\wedge,\vee,\otimes,\oplus,\rightarrow\}$, for every $(x_1,\ldots,x_n)\in[0,1]^n:$
\[
f_{\neg\varphi}(x_1,\ldots,x_n)=1-f_\varphi(x_1,\ldots,x_n),
\]
\[
f_{\varphi\circ\psi}(x_1,\ldots,x_n)=f_{\varphi}(x_1,\ldots,x_n)\circ^* f_{\psi}(x_1,\ldots,x_n);
\]
where $\circ^*$ denotes the operation corresponding to the connective $\circ$ according to Table \ref{tab:luksem}.

An optimization problem can be solved with efficient algorithms in the case it is convex, or even better if it is quadratic. In particular, convexity guarantees the existence (and unicity in the strong convex case) of an optimal solution under opportune hypothesis, as we will see in the next section. In addition, convex optimization is not affected by the presence of local minima, indeed any local optimum is also global. These are the main reasons why we decided to investigate the convexity of the functions corresponding to logical formulas. A complete functional representation for G$\ddot{\mbox{o}}$del, Product and \L ukasiewicz fuzzy logics can be found in \cite{aguzzoli2011free}. However, in this paper we are only interested in the latter, indeed functions corresponding to Product t--norm are at most \emph{quasi concave} and the G$\ddot{\mbox{o}}$del t--norm is represented by the \L ukasiewicz weak conjunction. The fundamental result about the functional representation of {\bf \L}--formulas is given by the well--known McNaughton Theorem \cite{novakmathematical}. It states that, for the propositional case, the algebra of formulas of \L ukasiewicz Logic on $n$ variables is isomorphic to the algebra of McNaughton functions defined on $[0,1]^n$.

\begin{definition}
	Let $f:[0,1]^n\rightarrow[0,1]$ be a continuous function with $n\geq0$, $f$ is called a McNaughton function if it is piecewise linear with integer coefficients, that is, there exists a finite set of linear polynomials ${p_1,\ldots,p_m}$ with integer coefficients such that for all $(x_1,\ldots,x_n)\in[0,1]^n$, there exists $i\leq m$ such that 
	\[
	f(x_1,\ldots,x_n)=p_i(x_1,\ldots,x_n).
	\]
\end{definition}

\begin{theorem}[McNaughton Theorem]
	For each $n\geq0$, the class of $[0,1]$-valued functions defined on $[0,1]^n$ that correspond to formulas of propositional \L ukasiewicz logic coincides with the class of McNaughton functions defined on $[0,1]^n$ and equipped with pointwise defined operations.
\end{theorem}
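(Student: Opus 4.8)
The theorem asserts the equality of two classes of functions, so the natural plan is to prove two inclusions. The inclusion ``every \L ukasiewicz formula induces a McNaughton function'' is the routine direction, which I would establish by structural induction on formulas. The converse, ``every McNaughton function is induced by some formula'', is the substantive half and requires an explicit construction.

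For the forward inclusion, the base cases are the propositional variables, whose associated functions are the coordinate projections $x_i$ (continuous, piecewise linear, integer coefficients), together with the constants $\underbar{0},\underbar{1}$ mapped to the constant functions $0$ and $1$. For the inductive step I would check that the class of continuous piecewise linear functions with integer coefficients on $[0,1]^n$ is closed under the truth functions of Table~\ref{tab:luksem}. Each connective is either an affine map with integer coefficients ($\neg x=1-x$) or a min/max of affine combinations of previously constructed functions ($x\otimes y=\max\{0,x+y-1\}$, $x\oplus y=\min\{1,x+y\}$, $x\wedge y=\min\{x,y\}$, $x\vee y=\max\{x,y\}$, $x\rightarrow y=\min\{1,1-x+y\}$); since composition preserves piecewise linearity and continuity, and since only integer constants and the operations $+,-,\min,\max$ occur, integrality of the coefficients is preserved as well, so the induction goes through.

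For the converse inclusion I would proceed in two stages. First, a lemma: every function of the form $g(x)=\max\{0,\min\{1,\,m_1x_1+\cdots+m_nx_n+b\}\}$ with $m_1,\ldots,m_n,b\in\mathbb{Z}$ is induced by a formula. This is built from the strong connectives: $\oplus$ realizes truncated addition, so the associativity of $\oplus$ gives $x\oplus\cdots\oplus x=\min\{1,mx\}$ for a positive integer $m$, while $x\otimes\neg y=\max\{0,x-y\}$ supplies truncated subtraction and hence negative coefficients, and the constants account for $b$. Combining these implements any truncated integer-affine form while automatically keeping the value in $[0,1]$. Second, I would invoke the lattice (max--min) normal form for continuous piecewise linear functions: such an $f$ can be written as $\bigvee_i\bigwedge_{j}\ell_{ij}$ for affine pieces $\ell_{ij}$, and then realize the outer $\vee$ and the inner $\wedge$ by the weak connectives, with each piece (truncated to $[0,1]$) supplied by the lemma.

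The main obstacle is the second stage: one must choose the index sets in the max--min representation so that the lattice expression reproduces $f$ \emph{exactly}, not merely some piecewise linear interpolant of its values. Concretely, at each point $x$ the inner conjunctions must select the affine pieces lying below $f$ and the outer disjunction must recover $f(x)$ from them; showing that the finitely many linear pieces of $f$ can be organized into index sets for which this holds everywhere is the combinatorial heart of McNaughton's argument, and is equivalently handled by the Schauder--hat decomposition of $f$. A secondary technical point is that the lemma must produce truncated affine functions whose restriction to $[0,1]^n$ agrees with the genuine pieces of $f$ on the relevant regions; this is precisely where the integer-coefficient hypothesis is essential, since only integer slopes and intercepts are reachable by finite combinations of the \L ukasiewicz connectives.
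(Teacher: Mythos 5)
First, a point of comparison: the paper never proves this statement. McNaughton's theorem is quoted as a classical result with a pointer to the literature (\cite{novakmathematical}), and no argument for it appears in the Appendix, so your attempt can only be measured against the standard proofs (McNaughton's original construction, or its modern presentations via lattice normal forms and Schauder hats). Your forward inclusion is correct and routine: structural induction, using closure of the class of continuous piecewise-linear functions with integer coefficients under the operations of Table~\ref{tab:luksem}. The converse, however, has two genuine gaps --- one you flag yourself, and one you do not.

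The unflagged gap is in your Stage-1 lemma. Composing truncated operations does not compute the truncation of the composed affine form: each intermediate truncation destroys information, so ``truncated addition'' and ``truncated subtraction'' cannot simply be chained. Concretely, for $g(x,y)=\max\{0,\min\{1,2x-y\}\}$ your recipe yields $(x\oplus x)\otimes\neg y$, which at $(x,y)=(3/4,1/2)$ evaluates to $\max\{0,\min\{1,3/2\}-1/2\}=1/2$, whereas $g(3/4,1/2)=1$; reversing the order to $(x\otimes\neg y)\oplus x$ fails instead at $(2/5,4/5)$, giving $2/5$ where $g=0$. Realizing $\max\{0,\min\{1,m_1x_1+\cdots+m_nx_n+b\}\}$ by a formula is a genuine lemma --- in the standard treatments it is where much of the work sits, proved by induction on the magnitude of the coefficients with lattice (min/max) operations interleaved between the strong connectives, not by chaining $\oplus$ and $\otimes\neg$. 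The second gap you concede explicitly: organizing the affine pieces of $f$ into index sets so that $\bigvee_i\bigwedge_j\ell_{ij}$ reproduces $f$ exactly is, as you say, ``the combinatorial heart of McNaughton's argument'', and naming that heart (or pointing at the Schauder-hat decomposition) is not proving it. So what you have is a faithful road map of the standard proof with both of its load-bearing steps left as black boxes; that is consistent with how the paper itself treats the theorem --- as an imported result --- but it is not a self-contained proof.
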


As a consequence, for every {\bf \L}--formula $\varphi$ depending on $n$ propositional variables, we can consider its corresponding function $f_\varphi:[0,1]^n\rightarrow[0,1]$, whose value on each point is exactly the evaluation of the formula with respect to the same variable assignment. Hence we can investigate the convexity of such functions that, for the McNaughton Theorem, are McNaughton functions.

\subsection{Convex McNaughton Functions}
As we said above, we are now interested in the as large as possible set of {\bf \L}--formulas corresponding to convex McNaughton functions. We start with the investigation of logical connectives corresponding to operations that preserve concavity or convexity\footnote{For an explicit definition of concave (convex) function, see Appendix \ref{appendix}.}.

\begin{lemma}\label{concpres}
	Let $\varphi,\,\psi$ be two {\bf \L}--formulas, then
	\begin{enumerate}
		\item $f_\varphi$ is convex if and only if $f_{\neg\varphi}$ is concave;
		\item if $f_\varphi,f_\psi$ are concave then the functions $f_{\varphi\wedge\psi}$ and $f_{\varphi\oplus\psi}$ are concave;
		\item if $f_\varphi,f_\psi$ are convex then the functions $f_{\varphi\vee\psi}$ and $f_{\varphi\otimes\psi}$ are convex.
	\end{enumerate}
\end{lemma}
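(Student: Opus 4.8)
The plan is to work directly from the algebraic semantics in Table~\ref{tab:luksem}, since each claim reduces to a statement about the corresponding operations on real-valued functions. I would first establish item~(1), which is essentially a triviality but underpins the duality between items~(2) and~(3). Since $f_{\neg\varphi}(x)=1-f_\varphi(x)$, and negation by a constant followed by sign flip turns convex into concave and vice versa, the equivalence is immediate: $f_\varphi$ convex means $f_\varphi(\lambda x + (1-\lambda)y)\le \lambda f_\varphi(x)+(1-\lambda)f_\varphi(y)$, and multiplying through by $-1$ and adding $1$ reverses the inequality, giving concavity of $1-f_\varphi$.

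Next I would prove item~(2), which concerns the weak conjunction $\wedge$ (the min) and the strong disjunction $\oplus$ (truncated sum), both applied to concave functions. The key general facts are: the pointwise minimum of two concave functions is concave, and the sum of two concave functions is concave, while the pointwise minimum with a constant (here $1$) preserves concavity since constants are concave. Writing $g=f_\varphi$ and $h=f_\psi$, for $\wedge$ we note $f_{\varphi\wedge\psi}=\min\{g,h\}$, and $\min$ of concave functions is concave by applying the defining inequality to whichever argument achieves the minimum at the interpolated point. For $\oplus$ we have $f_{\varphi\oplus\psi}=\min\{1,g+h\}$; here $g+h$ is concave as a sum of concave functions, and taking $\min$ with the concave constant $1$ again preserves concavity by the same minimum-of-concave-functions argument.

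Item~(3) then follows either by a direct dual argument or, more economically, by combining items~(1) and~(2) with the De~Morgan laws recalled in the excerpt. Specifically, I would exploit that $\varphi\vee\psi\equiv\neg(\neg\varphi\wedge\neg\psi)$ and $\varphi\otimes\psi\equiv\neg(\neg\varphi\oplus\neg\psi)$: if $f_\varphi,f_\psi$ are convex, then by~(1) $f_{\neg\varphi},f_{\neg\psi}$ are concave, so by~(2) $f_{\neg\varphi\wedge\neg\psi}$ and $f_{\neg\varphi\oplus\neg\psi}$ are concave, and applying~(1) once more yields convexity of their negations, namely $f_{\varphi\vee\psi}$ and $f_{\varphi\otimes\psi}$. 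This reduces the third item to bookkeeping once the second is in place.

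I expect the only genuine (if minor) technical point to be justifying that the pointwise minimum of concave functions is concave and that truncation by a constant preserves this property; everything else is either the trivial affine duality of item~(1) or the syntactic De~Morgan reduction for item~(3). The main thing to be careful about is \emph{which} fact is being used for \emph{which} connective: $\min$-preservation of concavity handles both $\wedge$ and the outer truncation in $\oplus$, while additivity handles the inner sum in $\oplus$. I would make sure to state explicitly that constants are simultaneously convex and concave so that truncating at $1$ (for concavity) or at $0$ (for the dual strong conjunction $\max\{0,g+h\}$ in item~(3)) behaves correctly, since the strong connectives involve clamping against these boundary constants.
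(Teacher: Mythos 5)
Your proposal is correct and follows essentially the same route as the paper's proof: item (1) by the trivial $1-f$ duality, item (2) by the min-of-concave-functions argument for $\wedge$ and concavity of the truncated sum for $\oplus$ (the paper does the truncation by a short case analysis on whether the value hits $1$, which is equivalent to your min-with-the-constant argument), and item (3) by De Morgan reduction to (1) and (2), exactly as in the paper. One immaterial slip: in your closing remark the strong conjunction should read $\max\{0,g+h-1\}$ rather than $\max\{0,g+h\}$, but since your actual proof of item (3) goes through De Morgan this does not affect anything.
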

\begin{proof}
	See Appendix.
\end{proof}
Therefore, the operations corresponding to the connectives $\wedge,\,\oplus$ preserve concavity, while that ones corresponding to $\vee,\,\otimes$ preserve convexity. In the following definition, we fix two different fragments of \L ukasiewicz formulas which are built according to such connectives.

\begin{definition}
	Let $(\wedge,\oplus)^*$ and $(\otimes,\vee)^*$ be the smallest sets of formulas (up to equivalence) such that:
	\begin{itemize}
		\item $\underbar{0}\in(\wedge,\oplus)^*$ and $\underbar{1}\in(\otimes,\vee)^*$;
		\item if $x\in\mathcal{V}$, then $x,\neg x\in(\wedge,\oplus)^*$ and $x,\neg x\in(\otimes,\vee)^*$;
		\item if $\varphi_1,\varphi_2\in(\wedge,\oplus)^*$, then $\varphi_1\wedge\varphi_2,\,\varphi_1\oplus\varphi_2\in(\wedge,\oplus)^*$;
		\item if $\varphi_1,\varphi_2\in(\otimes,\vee)^*$, then $\varphi_1\otimes\varphi_2,\,\varphi_1\vee\varphi_2\in(\otimes,\vee)^*$.
	\end{itemize}
\end{definition}
Anticipating the main result of this section, we refer to $(\wedge,\oplus)^*$ as \emph{the concave fragment} and to $(\otimes,\vee)^*$ as \emph{the convex fragment} of \L ukasiewicz logic. Since $\underbar{0},\underbar{1}$ correspond to constant functions and the literals correspond to projections or their negations, which are affine functions and therefore both concave and convex, the formulas inside a specific fragment are guaranteed to be concave or convex respectively. However, thanks to the following theorem (see e.g. \cite{rockafellar2009variational}), we are able to prove the if and only if claim.

\begin{theorem}
	Any convex piecewise linear function on $\mathbb{R}^n$ can be expressed as a max of a finite number of affine functions.
\end{theorem}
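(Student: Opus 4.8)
The plan is to exploit the two defining features separately: piecewise linearity supplies the finitely many affine functions that will serve as the candidates in the maximum, while convexity forces each such piece to sit below $f$ globally. Write $f$ for the convex piecewise linear function and let $\ell_1,\dots,\ell_m$ be the affine functions for which $\mathbb{R}^n$ decomposes into polyhedral regions $R_1,\dots,R_m$ with $f\equiv\ell_i$ on $R_i$. First I would discard every piece whose region $R_i$ fails to be full-dimensional: because $f$ is continuous, its values on a lower-dimensional region are recovered as limits of its values on the adjacent full-dimensional cells, so restricting attention to the affine functions attained on full-dimensional polyhedra loses nothing. This reduction is a matter of bookkeeping but is needed to make the interior-point argument below available for every retained piece.

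The easy inequality is $f(x)\le\max_i\ell_i(x)$. For any $x$ lying in a full-dimensional cell $R_i$ we have $f(x)=\ell_i(x)\le\max_j\ell_j(x)$, and by continuity this inequality extends to points on the boundaries between cells.

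The substantive step is the reverse inequality $f(x)\ge\ell_i(x)$ for every retained piece $\ell_i$ and every $x\in\mathbb{R}^n$. Here I would fix an interior point $x_0$ of the full-dimensional cell $R_i$; on an entire neighbourhood of $x_0$ the function $f$ agrees with the affine map $\ell_i$, so $f$ is differentiable at $x_0$ with $\nabla f(x_0)=\nabla\ell_i$. Convexity then yields the supporting-hyperplane inequality $f(x)\ge f(x_0)+\nabla f(x_0)\cdot(x-x_0)$ for all $x$, and since $\ell_i$ is affine the right-hand side equals $\ell_i(x_0)+\nabla\ell_i\cdot(x-x_0)=\ell_i(x)$, giving $f\ge\ell_i$ globally. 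Taking the maximum over $i$ and combining with the easy inequality produces $f=\max_i\ell_i$.

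The main obstacle is precisely the justification that an affine function coinciding with a convex function on an open set is a global minorant: this is exactly the point at which convexity is indispensable, since the supporting-hyperplane property fails for nonconvex $f$. A variant that sidesteps differentiability would invoke the subdifferential directly (cf.\ \cite{rockafellar2009variational}): any element of $\partial f(x_0)$ defines a supporting affine function, and because $f\equiv\ell_i$ near $x_0$ the gradient $\nabla\ell_i$ is the unique subgradient at $x_0$, delivering the same global inequality.
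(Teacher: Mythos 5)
Your proof is correct. Note, however, that the paper does not prove this statement at all: it is imported as a known fact from convex analysis (citing Rockafellar--Wets) and then used as a black box to obtain the max-representation (\ref{convlin}) and, from it, Proposition~\ref{prop:conc}. So your argument is not a variant of the paper's proof but a self-contained substitute for the citation, and it is essentially the classical one: the supporting-hyperplane (subgradient) inequality at an interior point of a full-dimensional piece makes every retained affine piece a global minorant of $f$, while piecewise linearity plus continuity gives $f\le\max_i\ell_i$, whence equality. One step deserves tightening. The paper's notion of piecewise linear (the McNaughton-style definition) only asserts that at each point $f$ coincides with one of finitely many affine functions; the coincidence sets $S_i=\{x:\ f(x)=\ell_i(x)\}$ are then closed by continuity but need not form the polyhedral cell decomposition your first paragraph posits, and your justification for discarding the degenerate pieces (``values on lower-dimensional regions are limits from adjacent full-dimensional cells'') is informal. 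The clean fix is to retain exactly those indices $i$ for which $S_i$ has nonempty interior: by a Baire-category (or Lebesgue-measure) argument, a finite union of closed sets with empty interior has empty interior, so the union of the retained $S_i$ contains a dense set and, being closed, equals $\mathbb{R}^n$. This yields both facts your proof needs --- every point lies in some retained piece (giving the easy inequality $f\le\max_i\ell_i$), and every retained piece has an interior point at which to run the supporting-hyperplane argument --- after which your proof goes through verbatim.
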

This means that, for each $n$-ary convex McNaughton function $f$ there exist $M_1,\ldots,M_k\in\mathbb{Z}^n,\,q_1,\ldots,q_k\in\mathbb{Z}$ such that:
\begin{equation}\label{convlin}
\mbox{for all }x\in[0,1]^n\qquad f(x)=\max_{i=1,\cdots,k}\,\left(M_i'\cdot x+q_i\right).
\end{equation}
On the other hand, every concave McNaughton function can be expressed as the minimum of a finite number of affine functions. We only mention that the coefficients of the affine functions are constructively determined by the shape of the considered formula. 

{\em Example.} Let us consider $\varphi=((x\wedge y)\oplus\neg y\oplus z)\wedge \neg z$, then $\varphi\in(\wedge,\oplus)^*$ and it is equivalent to $(x\oplus \neg y\oplus z)\wedge(y\oplus\neg y\oplus z)\wedge\neg z$. From this latter expression, we get:
\[
f_\varphi(x,y,z)=\min\{1,x-y+z+1,1-z\}.
\]
Finally, exploiting (\ref{convlin}) and thanks to Lemma \ref{concpres}, we can prove that $(\wedge,\oplus)^*$ and $(\otimes,\vee)^*$ coincide with the sets of all {\bf \L}-formulas whose corresponding McNaughton functions are concave and convex respectively.

\begin{proposition}\label{prop:conc}
	Let $f_\varphi:[0,1]^n\rightarrow[0,1]$ be a McNaughton function. Then,
	\begin{enumerate}
		\item $f_\varphi$ is concave if and only if  $\varphi\in(\wedge,\oplus)^*$;
		\item $f_\varphi$ is convex if and only if  $\varphi\in(\otimes,\vee)^*$.
	\end{enumerate}
\end{proposition}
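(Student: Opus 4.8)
The plan is to prove the two biconditionals by establishing each forward inclusion separately and then obtaining one item from the other by duality. The ``easy'' inclusions $\varphi\in(\wedge,\oplus)^*\Rightarrow f_\varphi$ concave and $\varphi\in(\otimes,\vee)^*\Rightarrow f_\varphi$ convex are essentially the content of the discussion preceding the statement: I would make them precise by a structural induction on $\varphi$. The base cases are that $\underbar{0},\underbar{1}$ give constant functions and that $x,\neg x$ give affine functions, all of which are both concave and convex; the inductive step is exactly Lemma~\ref{concpres}(2)--(3), since $\wedge,\oplus$ preserve concavity and $\vee,\otimes$ preserve convexity.

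For the converse I would work out item (2) directly (it matches the representation~(\ref{convlin})) and then deduce item (1) by duality. So let $f_\varphi$ be convex and write, using~(\ref{convlin}), $f_\varphi=\max_{i}\ell_i$ with $\ell_i(x)=M_i'\cdot x+q_i$ having integer coefficients on $[0,1]^n$. I would first discard the redundant affine functions, keeping only the \emph{genuine} pieces, i.e.\ those that are active (equal to $f_\varphi$) on a full--dimensional region, and treat any constant piece directly as $\underbar{0}$ or $\underbar{1}$. Since $f_\varphi\le1$ on the cube and a maximum is $\le1$ exactly when every term is, each $\ell_i\le1$ on $[0,1]^n$.

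The heart of the argument is a ``pinning'' of the additive constant of each genuine non--constant piece, namely $\max_{[0,1]^n}\ell_i=1$. Indeed $\max_{[0,1]^n}\ell_i=q_i+\sum_{m_{ij}>0}m_{ij}$ is an integer that is $\le1$; were it $\le0$, then $\ell_i\le0$ on the cube while $\ell_i=f_\varphi\ge0$ on its (full--dimensional) active region would force $\ell_i\equiv0$, contradicting non--constancy; hence the value is exactly $1$. Now I would encode $\ell_i$ as a strong conjunction of literals, taking $m_{ij}$ copies of $x_j$ for each positive coefficient and $|m_{ij}|$ copies of $\neg x_j$ for each negative one. Using the identity $\lambda_1\otimes\cdots\otimes\lambda_r=\max\{0,\sum_t\lambda_t-(r-1)\}$ together with $\max_{[0,1]^n}\ell_i=1$, the integer constant cancels exactly and this conjunction computes $\max\{0,\ell_i\}$. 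Finally, because $f_\varphi\ge0$ gives $f_\varphi=\max_i\ell_i=\max_i\max\{0,\ell_i\}$, the formula $\varphi$ is equivalent to the disjunction over $i$ of these conjunctions, which lies in $(\otimes,\vee)^*$.

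Item (1) then follows by duality: by Lemma~\ref{concpres}(1), $f_\varphi$ is concave iff $f_{\neg\varphi}$ is convex, and the De Morgan laws $\neg(\varphi\vee\psi)\equiv\neg\varphi\wedge\neg\psi$, $\neg(\varphi\otimes\psi)\equiv\neg\varphi\oplus\neg\psi$, $\neg\neg x\equiv x$ map $(\otimes,\vee)^*$ bijectively (up to equivalence) onto $(\wedge,\oplus)^*$; applying item (2) to $\neg\varphi$ yields item (1). I expect the main obstacle to be exactly the pinning step: one must match the integer additive constant so that the lower truncation $\max\{0,\cdot\}$ supplied by $\otimes$ reproduces the affine piece itself rather than a vertically shifted copy. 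This rests on the geometric observation that the active linear pieces of a $[0,1]$--valued convex McNaughton function attain the extreme value $1$ on the cube, which is what makes the clause--by--clause correspondence exact; checking that only genuine pieces need be encoded (redundant and constant pieces being harmless) is the accompanying bookkeeping.
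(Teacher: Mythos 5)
Your proposal is correct, and its skeleton matches the paper's: the easy inclusions go by structural induction on Lemma~\ref{concpres}, and the converse goes by encoding each affine piece of the representation (\ref{convlin}) as a block of repeated literals under a strong connective, then recombining the blocks with the weak one. The genuine difference is in the technical heart, i.e.\ how the integer additive constant of each piece is matched. The paper works on the \emph{concave} side: each piece $p_i$ of the $\min$-representation is encoded as $\bigoplus_{j\in P_i}a_{i_j}x_j\oplus\bigoplus_{j\in N_i}|a_{i_j}|\neg x_j\oplus q_i\underbar{1}$, where the leftover constant is simply padded in as $q_i$ copies of $\underbar{1}$; the only fact needed is $q_i\geq 0$, obtained by evaluating $p_i\geq f_\varphi\geq 0$ at the vertex of the cube minimizing $p_i$. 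The upper truncation $\min\{1,\cdot\}$ introduced by $\oplus$ is harmless because the outer weak conjunction stays below $f_\varphi\leq 1$, so the paper encodes \emph{every} piece uniformly, with no distinction between genuine, redundant or constant pieces. You instead work on the \emph{convex} side, where $\otimes$ truncates from below and there is no way to pad a constant downward (conjoining $\underbar{0}$ annihilates the block); this is precisely why you are forced to prove the pinning lemma $\max_{[0,1]^n}\ell_i=1$, which holds only for non-constant pieces active on a full-dimensional region, whence your preliminary discarding of redundant pieces. Your pinning argument is sound (a piece with $\max_{[0,1]^n}\ell_i\leq 0$ would vanish on a full-dimensional set, hence identically), and it buys a sharper structural fact --- every genuine non-constant piece of a $[0,1]$-valued convex McNaughton function attains the value $1$ on the cube --- together with a clean derivation of item (1) from item (2) via De Morgan duality, which the paper leaves implicit. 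What the paper's padding trick buys is brevity and robustness: no pinning, no genuineness hypothesis, no case split for constant pieces.

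One step you should not wave off as mere bookkeeping is the discarding itself: you must check that the maximum over the surviving genuine pieces still equals $f_\varphi$. This does hold, but it needs an argument --- the active sets of all pieces are closed and cover the cube, finitely many of them with empty interior cannot cover it, so the genuine pieces' active sets cover a dense subset, and two continuous functions agreeing on a dense set agree everywhere. Without this sentence your construction is applied to a list of pieces that is not yet known to represent $f_\varphi$.
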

\begin{proof}
	See Appendix \ref{appendix}.
\end{proof}
As a result, the largest fragments of {\bf \L} whose McNaughton functions are either concave or convex are determined. It is worth to notice that in general, in literature logical constraints are often initially expressed in boolean form and then they are \emph{fuzzified}. Since the fragments we define contain both a conjunction and a disjunction which are coherent with boolean connectives on the crisp values, in principle one can almost always translate boolean formulas into convex \L ukasiewicz constraints. However, as we sketch in the next section, any choice for this translation can slightly modify the expressiveness of the formulas we were dealing with.

\subsection{Notes on Expressiveness}\label{sec:expressiv}
In the whole paper, we suppose to deal with a prior knowledge expressed by a set of first--order {\bf \L}--formulas. However in the majority of cases, in practical problems we are given a set of boolean formulas, hence we have to decide a suitable way of translating them into the language of \L ukasiewicz logic. The majority of authors \cite{serafini2016logic,diligenti2017semantic,brocheler2012probabilistic} convert boolean conjunction with the t--norm and the disjunction with the t--conorm, namely with the pair $(\otimes,\oplus)$. It is worth noticing that Proposition \ref{prop:conc} characterizes the concave and the convex formulas in {\bf \L}, but in general it does not provide an effective way to embed any boolean formula into a specific fragment. However, this is always guaranteed by making use of either the concave or the convex connectives if we consider any boolean formula without implication and with negation on at most literals. In particular, given any boolean formula, we can always rewrite it into a \emph{conjunctive normal form} (CNF) and then we can apply the following two translations to the conjunctions and disjunctions to fall into the concave or the convex fragment respectively:
\begin{equation}\label{eq:normal}
\mbox{(concave)}\;\;\bigwedge_{i=1}^n\bigoplus_{j=1}^m(\neg)l_{ij},\quad\mbox{(convex)}\;\;\bigotimes_{i=1}^n\bigvee_{j=1}^m(\neg)l_{ij}.
\end{equation}

Actually, there are several possibilities to translate the boolean connectives into the \L ukasiewicz ones, where we have two conjunctions and two disjunctions. Even if the weak and the strong operations coincide on the boolean values $\{0,1\}$, the way they generalize on continuous values $[0,1]$ determines different semantics for the resulting formulas. In the following we show some examples to compare possible representations.

\begin{example}[Distributivity]
	In general, we can lose consistency on what we have to represent if we manipulate the boolean expressions before translating them into fuzzy terms. For instance, in boolean logic, the two formulas $x \wedge (y\vee z)$ and  $(x\wedge y)\vee( x\wedge z)$ are equivalent. However, if we translate them with the fragment $(\otimes,\vee)^*$ the equivalence still holds, whereas in general the same is not true with the fragment $(\otimes,\oplus)^*$. 
\end{example}

A well--studied class of formulas is given by the Horn clauses, i.e. formulas with a propositional variable implied by a conjunction of propositional variables. They are very common in the literature since they are quite expressive and easy to handle.

\begin{example}[Horn Clauses]
	Given a Horn clause, if we translate the conjunction with the t--norm, then we get:
	\[
	(x_1\otimes\ldots\otimes x_m)\rightarrow y\equiv(\neg x_1\oplus\ldots\oplus\neg x_m\oplus y)\in(\wedge,\oplus)^*,
	\]
	namely the class of Horn clauses translated in {\bf \L} with $\otimes$ is strictly contained in the concave fragment. Indeed, the following formulas are in $(\wedge,\oplus)^*$ and do not correspond to Horn clauses: 
	\[
	x\wedge y,\qquad x\oplus(y\wedge z),\qquad x\wedge(x\rightarrow y),\qquad (x\otimes y)\rightarrow(x\wedge z).
	\]
\end{example}

Finally in the rest of this section, we discuss some examples representing well--known rules in learning from logical constraints.

\begin{example}[Manifold Regularization]
	Manifold regularization assumes that a given relation $R(a,b)$ states when two objects $a$ and $b$ belong to the same manifold,
	requiring that the value of a predicate $P(x)$ should be consistent when evaluated on the two points. This condition can be expressed
	by the following boolean formula:
	\begin{equation}\label{mr}
	R(a,b)\rightarrow(P(a)\leftrightarrow P(b)),
	\end{equation}
	where $P(a)\leftrightarrow P(b)\equiv (P(a)\rightarrow P(b))\wedge(P(b)\rightarrow P(a))$.
	\\
	
	Since for all $x,y\in[0,1]$
	\[
	\min\{1,1-x+y,1-y+x\}=\max\{0,\min\{1-x+y,1-y+x\}\}
	\]
	then, $(x\rightarrow y)\wedge(y\rightarrow x)\equiv(x\rightarrow y)\otimes(y\rightarrow x)$ and in this case each choice we make between $\wedge$ and $\otimes$ yields an equivalent results. In particular, if we use the weak conjunction we can immediately see that such a formula belongs to the concave fragment.
\end{example}

\begin{example}[Mutually Exclusive Classes]
	One can ask, for instance in a collective classification problem, that a certain pattern belongs to one and only one of two (or more) classes. For short we indicate with $x$ and $y$ the two grounded predicates corresponding to the class assigned to the same object $a$. For instance, by means of $x\vee y$ we can express that $a$ belongs to at least one of the two classes and by $(x\wedge\neg y)\vee(\neg x\wedge y)$ that $a$ belongs to exactly one class. Such formulas can be translated into \L ukasiewicz logic in different ways. However it seems that some choices are more accurate with respect to the initial boolean semantics of the formula. For instance, if we translate $x\vee y$ with $x\oplus y$, then the formula will be satisfied for any pair of $[0,1]$--values summing to 1, on the other hand it can be more useful to translate it with the weak disjunction that corresponds to the maximum. For what concerns the exclusive disjunction, it can be represented in \L ukasiewicz logic by $(x\otimes\neg y)\vee(\neg x\otimes y)$ that belongs to the convex fragment.
	
\end{example}

\section{Learning Schemes}
\label{sec:zoo}
In this section we briefly introduce some learning frameworks where we are able to get some theoretical insights when exploiting the convex \L ukasiewicz fragment. In the big picture, we are interested in the learning of a set of real--valued functions ${\bf P}=\{p_j:\mathbb{R}^{n_j}\rightarrow\mathbb{R},\,j\in\mathbb{N}_J,\,n_j>0\}$ provided with some factual (supervisions) and abstract knowledge (logical formulas) on them. Throughout the paper, each objective function $p_j$ is supposed to be evaluated on a supervised training set $\mathscr{L}_j$ and an unsupervised training set $\mathscr{U}_j$, where:
\[
\mathscr{L}_j=\{({\bf x}_l^j,y_l^j):\,{\bf x}_l^j\in\mathbb{R}^{n_j},y_l^j\in\{-1,1\},\,l\in\mathbb{N}_{l_j}\},
\]
\[
\mathscr{U}_j=\{{\bf x}_u^j\in\mathbb{R}^{n_j}:\,u\in\mathbb{N}_{u_j}\}.
\]
For any $j$, the logical constraints related to the $j$-th predicate will be forced on the set $\mathscr{U}_j$. Since any quantifier is applied to a specific argument of a predicate, it can be useful to decompose by components the range of vectors ${\bf x}_u^j$ as $\mathscr{U}_j=\mathscr{U}_{j1}\times\ldots\times\mathscr{U}_{jn_j}$ such that each $\mathscr{U}_{jk}$ is the domain of the $k$-th argument of the predicate $p_j$.
It is also useful to define, the set $\mathscr{S}_j=\mathscr{U}_j\cup\mathscr{L}_j'$ (where $\mathscr{L}_j'=\{{\bf x}_l^j:\,({\bf x}_l^j,y_l^j)\in\mathscr{L}_j\}$) that contains the whole set of points in which the $j$-th objective is evaluated. In the following, we indicate with $s_j$ the cardinality of $\mathscr{S}_j$, $S=s_1+\ldots+s_J$ and $U=u_1+\ldots+u_J$. Further, we suppose we are given a set of \L ukasiewicz FOL formulas,  $KB=\{\varphi_h:\,h\in\mathbb{N}_H\}$ whose predicates are functions in {\bf P}. With respect to the notation, in the following we adopt some abbreviations. For instance, will omit the superscript $j$ on points and we write $p_{jl}$ instead of $p_j({\bf x}_l^j)$. In addition, we write ${\bf p}=(p_1,\ldots,p_J):\mathbb{R}^n\rightarrow\mathbb{R}^J$, where $n=n_1+\ldots+n_J$, for the overall objective function.


\subsection{Kernel Machines}
Kernel methods are a class of algorithms for pattern analysis that exploit high--dimensional feature representation. Among others, one of such algorithms that is widely employed is the well--known \emph{Support Vector Machine} (SVM), a supervised learning model aiming to separate a set of points that belong to different classes with an as large as possible margin \cite{boser1992training}. One of the goals of SVMs is that learning can be efficiently solved by quadratic optimization algorithms both in the primal and in the dual space. In the following, we show how to extend its classical formulation to include logical constraints still preserving quadratic programming. From now on in this section, we assume that each objective function $p_j$ belongs to some \emph{Reproducing Kernel Hilbert Space} (RKHS) $\mathcal{H}_j$. By means of the reproducing property, $p_j$ can be represented as an expansion of the kernel function $k_j\in\mathcal{H}_j$. The choice of the kernel is empirically validated, typical choices are  \emph{polynomial} or a \emph{gaussian} kernel.
\\

\subsubsection*{Constraints}
For every objective function $p_j$, we require the satisfaction of soft constraints corresponding to supervisions in $\mathscr{L}_j$ according to the classical approach with hinge loss functions:
\[
y_l(2 p_j({\bf x}_l)-1)\geq 1-2\xi_{j_l} \mbox{ with }\xi_{j_l}\geq0, \mbox{ for }l=1,\ldots,l_j,
\]
where the slack variable $\xi_{j_l}$ denotes the smallest nonnegative number satisfying the constraint. In the following, we refer to such constraints as \emph{pointwise constraints}. They are slightly modified with respect to the usual formulation, since the supervisions are labelled with values $y_l=\pm1$ whereas the predicates are supposed to assume 0--1 values denoting classical logic true--values. 

The objective functions in {\bf P} must assume $[0,1]$-values in all the points on which they are evaluated. Indeed they have to behave as logical predicates occurring in the formulas of KB. In previous works (e.g. \cite{diligenti2012bridging}) this is achieved applying a sigmoid function to predicates before enforcing the logical constraints. Since in general the sigmoid does not preserve convexity (as well as concavity), we opt for a different strategy, by requiring explicitly the 0--1 bound for every $p_j$. These linear constraints are referred to as \emph{consistency constraints}:
\begin{equation}\label{cc}
0\leq p_j({\bf x}_s)\leq 1, \mbox{ for } s=1,\ldots,s_j,\, {\bf x}_s\in\mathscr{S}_j.
\end{equation}

Finally, the \emph{logical constraints} arise from the knowledge base $KB$ that is supposed to be a collection of FOL {\bf \L}--formulas. However, according to (\ref{qq}) each quantified formula can be replaced with a propositional one once all the predicates are grounded on their domains of evaluation. We denote the set of such \emph{propositionalized} formulas, where the grounded predicates are considered as $[0,1]$ propositional variables, by $KB'$. For what concerns logical constraints, every $p_j$ is supposed to be evaluated on the unsupervised examples in $\mathscr{U}_j$. Since the formulas in $KB'$ depend, in general, on all the possible groundings of their occurring predicates, it is useful for $j=1,\ldots,J$ to indicate with ${\bf \bar{p}}_j$ the vector of all groundings of $p_j$ in $\mathscr{U}_j$, namely ${\bf \bar{p}}_j=(p_{j1},\ldots,p_{ju_j})$ and ${\bf \bar{p}}=({\bf \bar{p}}_1,\ldots,{\bf \bar{p}}_J)\in[0,1]^U$.

\begin{example}
	Let us consider ${\bf P}=\{p_1,p_2\}$ and $\varphi\in KB$ such that:
	\[
	\varphi:\,\forall x\exists y(p_1(x)\rightarrow p_2(x,y)).
	\]
	Given $\mathscr{U}_{11}=\mathscr{U}_{21}=\{x_1,x_2\}$ and $\mathscr{U}_{22}=\{y_1,y_2\}$, we have $\mathscr{U}_1=\mathscr{U}_{11}$ and $\mathscr{U}_2=\{(x_1,y_1),(x_1,y_2),(x_2,y_1),(x_2,y_2)\}$, hence the grounding vectors for the two predicates are
	\[
	{\bf \bar{p}}_1 = \left(p_{11}, p_{12}\right), \  \ {\bf \bar{p}}_2=\left(p_{21}, p_{22}, p_{23}, p_{24}\right).
	\]
	Therefore, we get as propositional form for $\varphi$ the formula
	\[		
	\left[(p_{11}\rightarrow p_{21})\vee(p_{11}\rightarrow p_{22})\right]\wedge\left[(p_{12}\rightarrow p_{23})\vee(p_{12}\rightarrow p_{24})\right] \ .
	\]	
\end{example}

Since we are now dealing with propositional formulas, all the results reported in the previous section can be used. In particular, every $n-$ary formula $\varphi$ in $KB'$ is isomorphic to a McNaughton function $f_\varphi:[0,1]^n\rightarrow[0,1]$. For the sake of simplicity, $f_h$ indicates the function corresponding to the formula $\varphi_h$ and $KB'$ is the set of such functions: $KB'=\{f_1,\ldots,f_H\}$. In addition, to make the notation as uniform as possible, we write any of such function as depending on the grounding vector of all predicates, i.e. $f_h=f_h({\bf \bar{p}})$, however in general it depends on only few predicates. Finally, we can express the logical constraints, and requiring their soft satisfaction, a new slack variable is introduced for each $h\in\mathbb{N}_H$, such as
$1-f_h({\bf \bar{p}})\leq\xi_h\mbox{ with }\xi_h\geq0$.
If $KB'\subseteq(\wedge,\oplus)^*$, namely if the formulas are built from the concave fragment, then $f_h$ is a concave function and $1-f_h$ is the convex function corresponding to the formula $\neg\varphi_{h}$.

The considered McNaughton functions are convex in the space of grounded predicates. Since we suppose that each objective function belongs to a certain RKHS, then we can write $p_j({\bf x})=\omega_j'\cdot\phi_j({\bf x})+b_j$, where $\phi_j:\mathbb{R}^{n_j}\rightarrow\mathbb{R}^{N_j}$ is a feature map determined by the $j$-th kernel function $k_j$ of $\mathcal{H}_j$, 	$\omega_j\in\mathbb{R}^{N_j}$ is said the $j$-th weight vector and $b_j\in\mathbb{R}$. If we assume ${\bf x}\in\mathscr{U}_j$ and we set $\hat{\omega}_j=(\omega_j',b_j)'$, then the values of predicates are totally determined by the matrix $\hat{\omega}=(\hat{\omega}_1,\ldots,\hat{\omega}_J)$. This entails that the formulas will be evaluated by composition on the weight space and therefore we need to guarantee the convexity of McNaughton functions on this space. Thanks to the linear form assumed by each objective function in the feature space (in general $N_j>>n_j$ or even $N_j$ can be infinite), the following lemma applies and guarantees convexity of the functional logical constraints in the weight space too.

\begin{lemma}
	Let $f:Y\subseteq\mathbb{R}^{m}\rightarrow\mathbb{R}$ be a convex (concave) function and $g:X\subseteq\mathbb{R}^{d}\rightarrow Y$ such that $g(x)=Ax +b$ with $A\in\mathbb{R}^{m,d},b\in\mathbb{R}^m$. Then the function $h:X\rightarrow\mathbb{R}$ defined by $h=f\circ g$ is convex  (concave) in $X$.
\end{lemma}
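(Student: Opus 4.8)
The plan is to argue directly from the definition of convexity, reducing everything to the single algebraic fact that an affine map preserves convex combinations. Implicitly I assume the domain $X$ is convex, so that convex combinations of its points again lie in $X$; this is needed even for the statement to be meaningful, and in the intended application $X$ is a weight space, which is trivially convex.

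First I would fix two points $x_1,x_2\in X$ and a scalar $\lambda\in[0,1]$, and set $\bar x=\lambda x_1+(1-\lambda)x_2\in X$. The core computation is to push $g$ through this convex combination: using linearity of the matrix multiplication and then splitting the translation term via $b=\lambda b+(1-\lambda)b$ (valid precisely because $\lambda+(1-\lambda)=1$), I obtain
\[
g(\bar x)=A\bar x+b=\lambda(Ax_1+b)+(1-\lambda)(Ax_2+b)=\lambda\, g(x_1)+(1-\lambda)\, g(x_2).
\]
Thus $g$ sends the convex combination of $x_1,x_2$ to the corresponding convex combination of $g(x_1),g(x_2)$ in $Y$.

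Next I would feed this identity into the convexity hypothesis on $f$. Since $g(x_1),g(x_2)\in Y$ and their convex combination equals $g(\bar x)$, the convexity of $f$ gives
\[
h(\bar x)=f(g(\bar x))=f\bigl(\lambda\, g(x_1)+(1-\lambda)\, g(x_2)\bigr)\le\lambda\, f(g(x_1))+(1-\lambda)\, f(g(x_2))=\lambda\, h(x_1)+(1-\lambda)\, h(x_2),
\]
which is exactly the convexity of $h=f\circ g$ on $X$. The concave case is identical with the inequality reversed, or can be deduced from the convex case applied to $-f$.

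There is really no hard part here: this is the standard fact that composition with an affine map preserves convexity. The only points deserving care are the bookkeeping of the constant term $b$ (one must resist treating $g$ as linear and instead exploit $\lambda+(1-\lambda)=1$ to reproduce $b$ on both sides of the convex combination) and the implicit convexity of the domain $X$, which guarantees $\bar x$ stays in $X$ so that $f(g(\bar x))$ is defined.
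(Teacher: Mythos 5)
Your proof is correct. Note that the paper states this lemma without any proof at all (it is the standard fact that precomposition with an affine map preserves convexity), and your argument is exactly the textbook one the authors implicitly rely on; your explicit remark that $X$ must be convex for the statement to make sense is a point the paper glosses over.
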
 

Summing up, we can embed the logical constraints into the overall loss function by means of convex functional constraints if we consider only formulas that belong to the concave fragment. However, since we suppose to deal with \L ukasiewicz formulas, any functional constraint is a McNaughton function, and a piecewise linear function, in particular. Therefore according to $(\ref{convlin})$ we have for every $h\in\mathbb{N}_H$: 
\[
1-f_h({\bf \bar{p}})=\max_{i=1,\cdots,I_h}\left(M^h_i\cdot {\bf \bar{p}}+q^h_i\right)\leq\xi_h
\]
if and only if
\begin{equation}\label{cl}
M^h_i\cdot {\bf \bar{p}}+q^h_i\leq\xi_h\quad\mbox{ for all }i\in\mathbb{N}_{I_h},
\end{equation}
where $M^h_i\in\mathbb{R}^{1,U}$ and $q^h_i\in\mathbb{R}$ are integer coefficients determined by the shape of the formula $\neg\varphi'_h$. This means that we can replace each convex constraint with a set of linear constraints (also in the weight space).

It is worth to notice that, since a distributive law holds in any fragment for the strong connective with respect to the weak one, we can easily rewrite any concave formula as a weak conjunction of strong disjunctions as well as any convex formula as a weak disjunction of strong conjunctions. Thereafter is straightforward to get the integer coefficients of the affine functions. For instance, given $\varphi\in(\wedge,\oplus)^*$ such that $\varphi:\,(x_1\oplus\neg x_2)\wedge(x_1\oplus x_2)$, we have:
\[
f_\varphi=\min\{1,x_1-x_2+1,x_1+x_2\}.
\]

\subsubsection*{Optimization}
In order to learn the objective functions we have to find the optimal values for the weight matrix. As in SVMs we ask for a solution that maximizes the margin between the false and the true class satisfying the constraints. Functional logical constraints are expressed by their linear counterparts, such that we can formulate a quadratic optimization problem in the primal space as well as in the dual space. Here we report the \emph{primal} formulation of the problem together with some comments on the optimal solution.  We refer to the Appendix \ref{appendix} for further details. 
\begin{equation}\label{pp}
\mbox{{\bf Primal Problem}}
\end{equation}
\[
\min\; \frac{1}{2}\sum_{j\in\mathbb{N}_J}||\omega_j||^2+C_1\sum_{\substack{j\in\mathbb{N}_J\\ l\in\mathbb{N}_{l_j}}}\xi_{j_l} + C_2\sum_{h\in\mathbb{N}_H}\xi_h \quad\mbox{subject to:}
\]
\[
\begin{array}{ll}
y_l(2 p_j({\bf x}_l)-1)\geq 1-2\xi_{j_l}, & \xi_{j_l}\geq0,\\
M^h_i\cdot {\bf \bar{p}}+q^h_i\leq\xi_h, & \xi_h\geq0,\\
0\leq p_j({\bf x}_s)\leq 1, & 
\end{array}
\]
where $j\in\mathbb{N}_J,\,l\in\mathbb{N}_{l_j},\,({\bf x}_l,y_l)\in\mathscr{L}_j$, $h\in\mathbb{N}_H,\,i\in\mathbb{N}_{I_h}$, $s\in\mathbb{N}_{s_j},\,{\bf x}_s\in\mathscr{S}_j$ and $C_1,C_2$ are positive real parameters determined by cross validation.

\begin{remark}
	The constants $C_1$ and $C_2$ express the (possibly different) degree of satisfaction for the pointwise and logical constraints respectively. It is worth noticing that the supervisions can be seen as atomic logical constraints or their negation. However we decided to keep them separated in this formulation both for clarity with respect to the usual SVM literature and for considering different values for the constants. In principle one can weigh any constraint differently. However we suppose to have the same degree of belief on all the supervisions as well as on all the logical formulas.
\end{remark}

The problem can be solved in the primal or dual space, because convexity guarantees that the KKT--conditions are also sufficient and the duality gap is null.
Indeed,  The solution of the $j$-th objective with respect to the optimal parameters can be written as:
\[
p_j({\bf x})=\omega^*_j\cdot\phi_j({\bf x})+b^*_j=2\sum_{l=1}^{l_j}\lambda^*_{j_l}y_lk_j({\bf x}_l,{\bf x})+
\]
\[
-\sum_{h=1}^H\sum_{i=1}^{I_h}\lambda^*_{h_i}\sum_{u=1}^{u_j}M^h_{i,u}k_j({\bf x}_u,{\bf x})+\sum_{s=1}^{s_j}(\eta^*_{j_s}-\bar{\eta}^*_{j_s})k_j({\bf x}_s,{\bf x})+b^*_j.
\]

\subsection{Collective Classification}
In general the logical constraints express relational informations on different objective functions at a time, hence they can be very suitable for \emph{collective} classification tasks. Here we formulate the collective setting in the same spirit as what we did for kernel machines. The logical formulas are collected in $KB$ and each predicate $p_j$ is supposed to be evaluated on its sample set $\mathscr{S}_j$, while ${\bf \bar{p}}_j=(p_{j1},\ldots,p_{js_j})$ denotes the vector of all its possible groundings, ${\bf \bar{p}}=({\bf \bar{p}}_1,\ldots,{\bf \bar{p}}_J)$.

In the previous subsection, we assumed to learn the objective functions by the training of opportune kernel machines, whereas in this case, we assume that an appropriate model (e.g. a neural network) has already been trained to compute their prior values. In the following, we indicate by ${\bf \hat{p}}_{j}$ the available vector of priors for the $j$-th objective function. Even if the priors assume $[0,1]$--values, we have to guarantee the same for the values into the grounding vectors of the objective functions. Then, we enforce again (\ref{cc}).

Given this setting, the considered multi--task learning problem is formulated as	
\[
\min_{\bf p} \; \sum_{j\in\mathbb{N}_J}||{\bf \bar{p}}_j - {\bf \hat{p}}_j||^2+C_1 \sum_{h\in\mathbb{N}_H}\xi_h \quad\mbox{subject to}
\]
\[
\begin{array}{ll}
1-f_h({\bf \bar{p}})\leq\xi_h, & \xi_h\geq0,\\
0\leq p_j({\bf x}_s)\leq 1 & 
\end{array}
\]
where $h\in\mathbb{N}_H$, $j\in\mathbb{N}_J,\,s\in\mathbb{N}_{s_j},\,{\bf x}_s\in\mathscr{S}_j$ and $C_1$ is used to weigh the degree of satisfaction of the logical constraints. This optimization problem aims at finding the grounding for the predicates closest to the given priors and yielding the minimal violation of the constraints. If we assume to deal with formulas in the concave fragment, then the logical constraints can be rewritten according to (\ref{cl}) and we obtain a quadratic programming problem that can be efficiently solved.

\subsubsection*{Manifold Regularization}
As an example, we discuss here the already mentioned logical rule for manifold regularization (\ref{mr}). In principle, given any binary relation $R(x,y)$ on a domain of a predicate, we can require that this predicate assumes as close as possible values on those points that are related by $R$. For instance, the topological properties of the original domains of the predicates are not explicitly represented in the considered setting, apart from the values assigned for the given priors. This suggest to consider a predefined binary relation $R({\bf x}_1,{\bf x}_2)$ expressing the membership of the two points to a same manifold. For instance, a spatial regularization manifold can be defined as $R({\bf x}_1,{\bf x}_2)=\exp\left(-\frac{||{\bf x}_1-{\bf x}_2||^2}{\sigma^2}\right)$,
where ${\bf x}_1,{\bf x}_2\in\mathscr{S}_j$ for some $j\in\mathbb{N}_J$, and with $\sigma$ as neighbourhood width parameter. To enforce the manifold regularization we enforce the satisfaction of the following logical formula for each ${\bf x}_1,{\bf x}_2\in\mathscr{S}_j$,
\[
R_{j12} \rightarrow ((p_{j1} \rightarrow p_{j2})\wedge(p_{j2} \rightarrow p_{j1}))
\]
that is equivalent to the convex constraint 
\[
\max\{0,\,R_{j12}+p_{j1}-p_{j2}-1,\,R_{j12}-p_{j1}+p_{j2}-1\}\leq\xi,
\]
and to the following linear constraints
\[
\begin{array}{l}
R_{j12}+p_{j1}-p_{j2}-1\leq\xi,\\
R_{j12}-p_{j1}+p_{j2}-1\leq\xi.
\end{array}
\]

%
%

\subsection{Probabilistic Soft Logic}
Probabilistic soft logic (PSL) \cite{kimmig2012short,bach2015hinge} is a general framework for probabilistic reasoning in relational domains. Similarly to Markov Logic Networks \cite{richardson2006markov}, PSL uses first--order logic rules to instantiate a graphical model having as nodes the values of each grounded predicate, represented as soft--assignments in $[0,1]$.

PSL uses the \L ukasiewicz logic to implement a relaxation technique commonly used to solve MAX SAT problems. In particular, let $C = \{c_1, \ldots , c_m\}$ be a set of logic disjunctive clauses, where each formula in the disjunction is a literal, i.e. an atomic formula or its negation. 
PSL embeds the knowledge into a Markov Random Field (MRF), which builds a distribution over possible interpretations as:
\[
P(\mathcal{I}) = \frac{1}{Z} \exp \left(-\sum_{j=1}^m \lambda_j~ \Phi_j(\mathcal{I}) \right)
\]
where $\lambda_j \ge 0$ is the weight of the clause $c_j$, $Z$ is a normalization constant and the potential $\Phi_j$ expresses the distance from the satisfaction of the formula $c_j$. Each weight $\lambda_j$ can be used to express how strongly the $j$-th clause is enforced to hold true. In fact, a higher weight penalizes stronger an assignment that does not satisfy the corresponding clause.

PSL assumes the assignment of a template to each clause $c_j$, that is reused for each single grounding of the clause in the interpretation. Assuming that a clause is universally quantified, the MRF has one clique for each grounding of such formula and the potential $\Phi_j$ can be expressed as the sum of the potential $\phi_j$ on all the possible groundings. In particular, $\Phi_j (\mathcal{I}_j) = \sum_{g \in \mathcal{I}_j} \phi_j(g)$, where
$\mathcal{I}_j$ is the set of groundings of the $j$-th formula with respect to the interpretation $\mathcal{I}$. Let $I^{+}_j$ and $I^{-}_j$ be the indexes of the positive and negative literals respectively in a grounding of $c_j$. PSL employs the \L ukasiewicz logic to express $\phi_j$ and since disjunctive clauses are supposed, $\phi_j$ results into the following convex functional:
\begin{equation}\label{eq:psl}
\phi_j(g) = \max\{0,1-\sum_{k \in I^{+}_j} g_k-\sum_{k \in I^{-}_j} (1 - g_k)\}.
\end{equation}

The PSL framework defines an efficient method to perform inference and to determine the most likely interpretation given the available evidence. This is equivalent to minimize the summation in the exponential, which corresponds to a linear (convex) optimization problem under the restrictions defined above. However, using the concave \L ukasiewicz fragment proposed in this paper, inference remains tractable also when lifting the restriction to disjunctive formulas. In fact, as we have already mentioned, any boolean formula can be rewritten in (CNF) and then embedded into the concave fragment exploiting (\ref{eq:normal}). However, the negation of such formula, that expresses its distance from the satisfaction to be minimized, can be written as the convex functional:
\[
\max\{0,1-l_1(g),\ldots,1-l_n(g)\}
\]
where for $i=1,\ldots,n$, $l_i(g)$ corresponds to the grounding $g$ of some logic disjunctive clause $l_i$. Since the expression above, can be thought of as a potential that extends (\ref{eq:psl}) still preserving convexity, we can extend the set of formulas represented in PSL to the whole concave fragment $(\wedge,\oplus)^*$.

\subsubsection*{Weight Learning}
In the learning formulations of this paper, i.e. (\ref{pp}), we assumed for simplicity to have no preferences among the logical rules, validating a shared parameter $C_2$ for their degree of satisfaction. However, PSL weight learning is commonly performed by maximizing the likelihood of the training data via gradient descent, where the derivative with respect to a weight $\lambda_j$ is:
$$
\frac{\partial \log P(\mathcal{I})}{\partial \lambda_j} = E_{\bf \lambda} \left[ \Phi_j(\mathcal{I}) \right] -\Phi_j(\mathcal{I}).
$$
The gradient with respect to the $j$--th clause weight is null when the distance from satisfaction of the training data $I_t$ corresponds to what is predicted by the model: $\Phi_j(\mathcal{I}_t) = E_{\bf \lambda}\left[ \Phi_j(\mathcal{I}) \right]$.
Computing the expected value is intractable in a general setting and improving PSL weight learning is an open research problem. A common solution is to approximate the expected value with the most probable interpretation according to the current weights:
$E_{\bf \lambda}\left[ \Phi_j(\mathcal{I}) \right] \approx \Phi_j(\mathcal{I}^\star)$.
Under this assumption, weight learning becomes tractable, because inference to determine the most probable interpretation corresponds to solving the convex optimization task as previously described.

\section{Experimental Results}
\label{sec:exp}
In this section we illustrate by means of some evaluations on an artificial dataset how to exploit our main result in practice. In the example, we consider the case of a given boolean formula which has to be translated into a continuous logical constraint. As we pointed out in Section \ref{sec:expressiv}, weak and strong connectives can behave quite differently still maintaining the coherence on the crisp values $0,1$.


\subsubsection*{Fuzzifications}
We consider the case of four predicates $A,B,C,D$ defined on the same domain $\mathcal{U}=[-3,3]\times[-3,3]$. Such predicates have to be thought of as membership functions of certain classes of patterns. In this case, we assume ${\bf A}=[-3,1]\times[-2,2]$, ${\bf B}=[-1,3]\times[-1,1]$, ${\bf C}=[-1,1]\times[-3,3]$, ${\bf D}=[-1,1]\times[-1,1]$.
Among the possible relations holding for the predicates corresponding to such classes, we decided to study the effect of two different continuous translations into \L ukasiewicz logic of the following boolean formula:
\begin{equation}\label{eq:constr}
\forall x:\,(A(x)\wedge B(x))\rightarrow (C(x)\wedge D(x)).
\end{equation}
As pointed out on the expressiveness notes, we can rewrite such a formula in CNF as
\[
(\neg A(x)\vee\neg B(x)\vee C(x))\wedge(\neg A(x)\vee\neg B(x)\vee D(x)),
\]
and, for instance, translate it in {\bf \L} with the mapping of the boolean connectives $(\wedge,\vee)$ either in ${\bf (1)}\,(\otimes,\oplus)$ that are the t--norm and the t--conorm or in ${\bf (2)}\, (\wedge,\oplus)$, that correspond to the convex operations. The soft constraints corresponding to such a formula in the two cases are:
\[
\begin{array}{crl}
{\bf (1)} & \min\{1,& \max\{0,A_i+B_i-C_i-1\}+ \\
&             &+\max\{0,A_i+B_i-D_i-1\}\}\leq\xi,\\
\\
{\bf (2)} & \max\{0,& A_i+B_i-C_i-1,A_i+B_i-D_i-1\}\leq\xi,
\end{array}
\]
where the subscript $i$ denotes the grounding of the corresponding predicate on the $i$-th point of the dataset.

For the evaluation we exploit a grid of points for any class. However only some subsets of these samples are provided with supervisions. In this example, we assume the sets {\bf A} and {\bf B} as fully labeled, while {\bf C} is partially labeled and {\bf D} is totally unsupervised. We compare the results obtained for the classes {\bf C} and {\bf D} when using the translations of the boolean formulas into \L ukasiewicz connectives, with respect to the percentage of supervisions on {\bf C}. We exploited a SVM, as described in (\ref{pp}), with a Gaussian kernel per predicate with standard deviation $\sigma=1$ and constant values $C_1=15$, $C_2=10$ for the constraints. The experiments are performed in MATLAB exploiting the interior-point algorithm with different runs over random permutations of the available labels for {\bf C}, while increasing the amount of supervisions.

The plots in figure \ref{fig:reca1v} report the means of the F1 score of the model without logical constraints, with translation {\bf (1)} (i.e. with no convexity) and with the convex fragment {\bf (2)}, respectively in the case we randomly initialize the optimization algorithms.
\begin{figure}[htbp]	
	\begin{center}
		\includegraphics[width=1\linewidth]{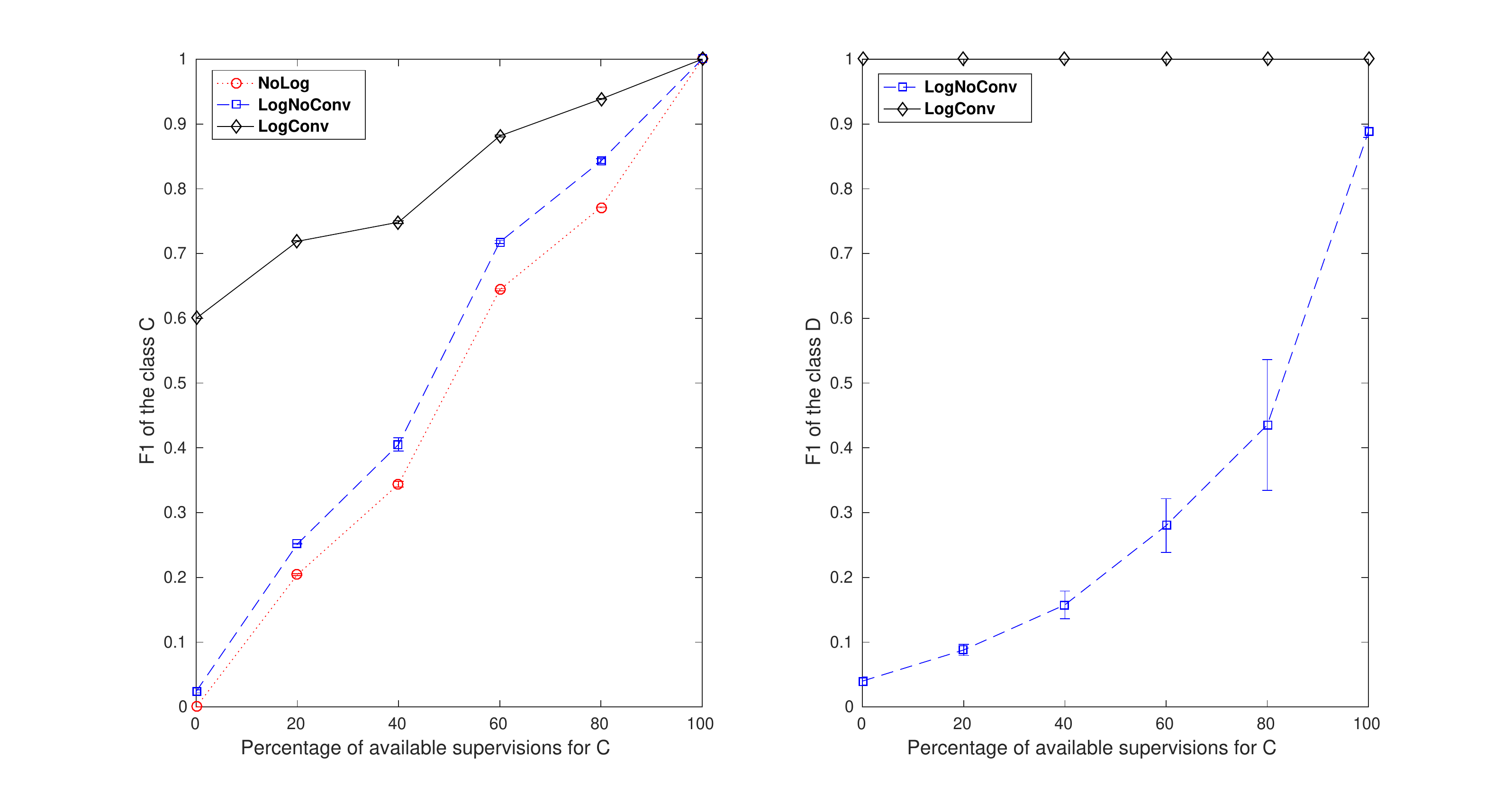}
	\end{center}
	\caption{F1 score of the classifiers on the classes C e D respectively, varying the amount of available supervisions in C.}
	\label{fig:reca1v}
\end{figure}

As we can see, the model with the convex constraint outperforms the other one given any number of supervisions for {\bf C}. However, the distance decreases when this number becomes bigger. For what concerns the class {\bf D}, if we increase the available supervisions of {\bf C} the recall for the convex case does not change, while we can observe some improvements for the non-convex classifier. Indeed the violation of the non convex constraint is due to the sum of two different contributes concerning both the classes. If we increase the number of  supervisions, then the contribution for {\bf C} to the constraint decreases and the optimization becomes more effective on the class {\bf D}.


\section{Conclusions}
\label{sec:conc}
%

The main contribution of this paper is a theoretical result about \L ukasiewicz logic that can be exploited in different learning schemes. In principle, whenever we are given a set of logical constraints, we can translate them into an equivalent form with only conjunctions, disjunctions and negations on propositional variables. Then if we translate them by the convex fragment of {\bf \L}, the constraints turn out to be convex and also equivalent to a set of linear constraints. Throughout the paper, we show how the provided theoretical result can be included into different learning settings in order to formulate a convex or even a quadratic optimization problem. In particular, we considered the integration into learning kernel machines, collective classification and Probabilistic Soft Logic.

\appendices
\section{}\label{appendix}
\begin{proof}[{\bf Derivation of the Dual Problem for SVMs}]
	From the problem (\ref{pp}), we can derive the Lagrangian function $\mathcal{L}(\hat{\omega},\xi,\lambda,\mu,\eta)$ as:
	\[
	\begin{array}{l}
	\displaystyle\frac{1}{2}\sum_j||\omega_j||^2+C_1\sum_{j,l}\xi_{j_l}+C_2\sum_h\xi_h-\sum_{j,l}\mu_{j_l}\xi_{j_l}+\\
	\displaystyle-\sum_{j,l}\lambda_{j_l}(y_l(2 p_j({\bf 	 x}_l)-1)-1+2\xi_{j_l})-\sum_h\mu_h\xi_h+\\
	\displaystyle-\sum_{h,i}\lambda_{h_i}(\xi_h-M^h_i\cdot {\bf \bar{p}}-q^h_i)-\sum_{j,s}((\eta_{j_s}-\bar{\eta})p_j({\bf x}_s)+\bar{\eta}_{j_s}).
	\end{array}
	\]
	If we set for all $j\in\mathbb{N}_J,l\in\mathbb{N}_{l_j},h\in\mathbb{N}_H,i\in\mathbb{N}_{I_h},s\in\mathbb{N}_{s_j}$, the usual {\bf KKT}--conditions, then the existence of the solution to the problem is guaranteed. In addition, by setting the null gradient condition of $\mathcal{L}$ with respect to  $\omega_j,b_j,\xi_{j_l},\xi_h$, we are able to formulate the problem also in the dual space:
	\[
	\max\; \theta(\lambda,\eta)=\mathcal{L}(\hat{\omega}^*,\xi^*,\lambda,\mu,\eta)\quad\mbox{subject to:}
	\]
	\[
	\begin{array}{l}
	\displaystyle\sum_{h,i}\lambda_{h_i}\sum_uM^h_{i,u} = 2\sum_l\lambda_{j_l}y_l+\sum_s(\eta_{j_s}-\bar{\eta}_{j_s}),\\
	0\leq\lambda_{j_l}\leq C_1,\; 0\leq\lambda_{h_i}\leq C_2, \eta_{j_s}\geq0,\;\bar{\eta}_{j_s}\geq0,
	\end{array}
	\]
	where $j\in\mathbb{N}_J,l\in\mathbb{N}_{l_j},h\in\mathbb{N}_H,i\in\mathbb{N}_{I_h},s\in\mathbb{N}_{s_j}$.

\end{proof}

\begin{definition}[Concave and Convex Functions]
	A function $f:X\subseteq\mathbb{R}^n\rightarrow\mathbb{R}$ is said to be
	\[
	\begin{array}{cc}
	\mbox{convex}\hspace{0.2cm}\mbox{iff} & f(\lambda x+(1-\lambda)y)\leq\lambda f(x)+(1-\lambda)f(y)\\
	\mbox{concave}\hspace{0.2cm}\mbox{iff}  & f(\lambda x+(1-\lambda)y)\geq\lambda f(x)+(1-\lambda)f(y)
	\end{array}
	\]
	for any $x,y\in X$, $\lambda\in[0,1]$.
\end{definition}

\begin{proof}[{\bf Proof of Lemma \ref{concpres}}]
	
	\begin{enumerate}
		\item This is obvious since the opposite of any convex function is a concave one and vice versa.
		\item If $f_\varphi$ and $f_\psi$ are concave, then for all $x,y,\,\lambda\in[0,1]$, $f_{\varphi\wedge\psi}(\lambda x+(1-\lambda)y)=\min\{f_\varphi(\lambda x+(1-\lambda)y),\,f_\psi(\lambda x+(1-\lambda)y)\}\geq\min\{\lambda f_\varphi(x)+(1-\lambda)f_\varphi(y),\,\lambda f_\psi(x)+(1-\lambda)f_\psi(y)\}\geq\lambda f_{\varphi\wedge\psi}(x)+(1-\lambda)f_{\varphi\wedge\psi}(y)$. 
		\\
		Moreover, by definition $f_{\varphi\oplus\psi}(x)=\min\{1,f_\varphi(x)+f_\psi(x)\}$, thus if $f_{\varphi\oplus\psi}(\lambda x+(1-\lambda)y)=1$ then obviously it is greater or equal than $\lambda f_{\varphi\oplus\psi}(x)+(1-\lambda)f_{\varphi\oplus\psi}(y)$. Otherwise $f_{\varphi\oplus\psi}=f_\varphi+f_\psi$ and sum preserves concavity (and it preserves convexity too) so the thesis easily follows.
		\item This point follows from 1) and 2) plus recalling that $f_{\varphi\vee\psi}=f_{\neg(\neg\varphi\wedge\neg\psi)}$ and $f_{\varphi\otimes\psi}=f_{\neg(\neg\varphi\oplus\neg\psi)}$.
	\end{enumerate}
\end{proof}

\begin{proof}[{\bf Proof of Proposition \ref{prop:conc}}]
	First of all we note that, as a consequence of Lemma \ref{concpres}, if $\varphi$ belongs to the concave fragment, then $f_\varphi$ is a concave function. Indeed, all the connectives occurring in $\varphi$ correspond to operations that preserve concavity and literals and constants correspond to affine functions. The same argument holds if the formula belongs to the convex fragment.
	
	On the other hand, let us suppose that $f_\varphi$ is a concave piecewise linear function, hence there exist some elements $a_{i_j},b_i\in\mathbb{Z}$ for $i=1,\ldots,m$ and $j=1,\ldots,n$, such that:		
	\[
	f_\varphi(x)=\min_{i=1}^m\,a_{i_1}x_1+\ldots+a_{i_n}x_n+b_i,\qquad x\in[0,1]^n.
	\]
	If we set $p_i(x)=a_{i_1}x_1+\ldots+a_{i_n}x_n+b_i$ for $i=1,\ldots,m$, our claim follows provided every $p_i$ corresponds to a formula in $(\wedge,\oplus)^*$. Indeed the operation of minimum is exactly performed by the connective $\wedge$. Let us fix $i\in\{1,\ldots,m\}$, then we can write
	\[
	p_i(x)=\sum_{j\in P_i}a_{i_j}x_j+\sum_{j\in N_i}a_{i_j}x_j+b_i,
	\]
	where $P_i=\{j\leq n:\,a_{i_j}>0\}$ and $N_i=\{j\leq n:\,a_{i_j}<0\}$. For short, given any $\psi\in{\bf \L}$, we write $a\psi$ with the meaning of $\bigoplus_{i=1}^a \psi$ or $\underbar{0}$, if $a>0$ or $a=0$ respectively. Therefore, we can consider the following formula as corresponding to the function $p_i$:
	\[
	\varphi_i=\bigoplus_{j\in P_i}a_{i_j}x_j\oplus\bigoplus_{j\in N_i}|a_{i_j}|\neg x_j\oplus q_i\underbar{1}.
	\]
	Indeed the first strong disjunction corresponds to all the positive monomials of $p_i$. The second one corresponds to all the negative monomials of $p_i$, but it also introduces the quantity $\sum_{j\in N_i}|a_{i_j}|$. Finally $q_i=b_i-\sum_{j\in N_i}|a_{i_j}|$, with $q_i\geq0$ since $p_i(x)\geq0$ for all $x\in[0,1]^n$ and in particular $p_i(\bar{x})=b_i-\sum_{j\in N_i}|a_{i_j}|\geq0$ where $\bar{x}$ is the vector with 0 in positive and 1 in negative monomial positions respectively. The overall formula can be written as $\varphi=\varphi_1\wedge\ldots\wedge\varphi_m$.
\end{proof}

%
%

\ifCLASSOPTIONcaptionsoff
  \newpage
\fi



%

\bibliography{references}
\bibliographystyle{IEEEtran}

\end{document}